\documentclass[10pt,twocolumn]{article}

\usepackage[margin=0.72in,columnsep=0.23in]{geometry}
\usepackage{amsmath,amssymb,amsthm}
\usepackage{booktabs,array,graphicx,microtype,enumitem,dblfloatfix,xcolor,xurl,placeins}
\usepackage[numbers,sort&compress]{natbib}
\usepackage[hidelinks]{hyperref}
\usepackage[nameinlink,noabbrev]{cleveref}

\newtheorem{proposition}{Proposition}

\newcommand{\R}{\mathbb{R}}
\newcommand{\ones}{\mathbf{1}}
\newcommand{\diag}{\operatorname{diag}}

\newcommand{\policy}{\textsc{DrainSinkhorn}}

\newcommand{\CMetroOTDeploymentSpeedup}{4.110\ensuremath{\times}}
\newcommand{\CMetroOTDeploymentCI}{[4.104, 4.114]}
\newcommand{\CMetroRetirementOTSpeedup}{1.754\ensuremath{\times}}

\newcommand{\CMetroEndpointSpeedup}{4.074\ensuremath{\times}}
\newcommand{\CMetroEnergyFactor}{1.783\ensuremath{\times}}
\newcommand{\CMetroSlotsStatic}{5696}
\newcommand{\CMetroSlotsActive}{3128}

\newcommand{\CPackerCompactionOTSpeedup}{1.540\ensuremath{\times}}
\newcommand{\CPackerCompactionOTRange}{1.5389--1.5404\ensuremath{\times}}
\newcommand{\CPackerSameControllerOTSpeedup}{1.258\ensuremath{\times}}
\newcommand{\CPackerSameControllerOTCI}{[1.257, 1.259]}
\newcommand{\CPackerSameControllerReplication}{1.263\ensuremath{\times}}
\newcommand{\CPackerSameControllerReplicationCI}{[1.262, 1.264]}
\newcommand{\CPackerSameControllerRange}{1.250--1.270\ensuremath{\times}}
\newcommand{\CPackerPackingRange}{1.258--1.280\ensuremath{\times}}
\newcommand{\CPackerCompactionRange}{1.480--1.638\ensuremath{\times}}
\newcommand{\CPackerNeutralCompaction}{0.9995\ensuremath{\times}}
\newcommand{\CPackerNeutralCompactionCI}{[0.9987, 1.0003]}

\newcommand{\CPackerShadowNegatives}{1488}
\newcommand{\CPackerShadowMissedReady}{0}
\newcommand{\CPackerScreenDiscrepancy}{\ensuremath{1.86\times10^{-7}}}

\newcommand{\CFeatureTrainingActiveOTSpeedup}{3.798\ensuremath{\times}}
\newcommand{\CFeatureTrainingActiveOTRange}{3.704--3.983\ensuremath{\times}}
\newcommand{\CFeatureTrainingEndpointSpeedup}{2.786\ensuremath{\times}}
\newcommand{\CFeatureTrainingStaticOverWidthOne}{2.34729\ensuremath{\times}}
\newcommand{\CFeatureTrainingActiveOverStatic}{1.03556\ensuremath{\times}}
\newcommand{\CFeatureTrainingActiveOverStaticRange}{1.02982--1.04311\ensuremath{\times}}

\newcommand{\COTTNativeActiveSpeedup}{2.600\ensuremath{\times}}
\newcommand{\COTTNativeActiveRange}{2.338--2.664\ensuremath{\times}}
\newcommand{\COTTStaticActiveSpeedup}{1.545\ensuremath{\times}}
\newcommand{\CKeopsLiDARActiveSpeedup}{3.174\ensuremath{\times}}
\newcommand{\CKeopsLiDARActiveRange}{2.180--3.944\ensuremath{\times}}
\newcommand{\CKeopsImageNetActiveSpeedup}{1.415\ensuremath{\times}}
\newcommand{\CKeopsImageNetActiveRange}{1.360--1.448\ensuremath{\times}}

\newcommand{\CMThreeHomogeneousRetirementSpeedup}{1.105\ensuremath{\times}}
\newcommand{\CMThreeHeterogeneousRetirementSpeedup}{1.217\ensuremath{\times}}

\title{\textsc{DrainSinkhorn}: Safe Elimination for Batched Entropic Optimal Transport%
\thanks{Code: \href{https://github.com/cis-aconiticacid/DrainSinkhorn}{github.com/cis-aconiticacid/DrainSinkhorn}}}
\author{Xinyang Wen}
\date{}

\begin{document}
\maketitle
\begin{abstract}
Fast OT backends reduce the cost of one Sinkhorn update, but a static batch
still runs at full width until its slowest problem finishes. We present
\policy{}, an execution layer for batched entropic optimal transport that
physically removes completed problems and narrows subsequent updates. The
candidate-axis kernel applies the original per-instance Sinkhorn map
independently to every lane. After an alternating scaling cycle, one marginal
is current by construction; a batched check of the other marginal filters the
lanes that require the baseline two-sided stopping check. Passing lanes are
removed from every candidate-indexed tensor, while unfinished lanes continue.

The mathematics characterizes removable work. For completion depths
$\ell_1,\ldots,\ell_W$, static execution spends
$W\ell_{\max}$ candidate-update slots, whereas shrinking active execution spends exactly
$\sum_t\ell_t$. A survival-curve formulation shows that positive removable
work requires within-window disagreement in completion depth. A quotient
Perron--Frobenius analysis gives a local mechanism for this disagreement: the
finite-tolerance tail depends on the complete modal spectrum and proposal
alignment, not on the slowest eigenvalue alone.

DrainSinkhorn delivers consistent backend-matched gains on real workloads. Under
a matched $10^{-3}$ two-sided residual contract, its complete Flash-backed OT
path is \CMetroOTDeploymentSpeedup{} faster than public FlashSinkhorn serial on
MetroPT-3 (95\% CI \CMetroOTDeploymentCI{}) and
\CFeatureTrainingActiveOTSpeedup{} faster on ImageNet-32 feature couplings. A
two-host Packer19 sweep holds the current-residual controller fixed across five
tolerances: the complete path is \CPackerSameControllerRange{} faster than
public serial, and candidate-axis static packing is \CPackerPackingRange{}
faster than width-one execution. When completion depths differ, physical
compaction adds \CPackerCompactionRange{} over logical masking; when every lane
finishes together at $10^{-2}$, that contrast is neutral. The same shrinking
active-set principle improves matched OTT-JAX and PyKeOps paths. Numerical
parity, achieved residuals, and application endpoints supply the implementation
evidence; all reported consumer and training-quality gates pass.

\end{abstract}

\section{Introduction}
\label{sec:introduction}

GPU implementations of entropic optimal transport (EOT) have become markedly
better at executing a single Sinkhorn update. Matrix-free operators, tiled
reductions, and fused streaming kernels reduce memory traffic and avoid
materializing the transport matrix
\citep{feydy2019sinkhorn,cuturi2022ott,ye2026flashsinkhorn}. Many workloads,
however, prepare a batch of couplings for predictive maintenance, scientific
state transitions, or minibatch training. Once the inner update is fast, total
cost also depends on how long each problem remains in that batch.

Different problems usually reach the stopping tolerance at different
iterations. A static batch follows the deepest solve: completed candidates
continue to occupy state, audit bandwidth, and physical kernel lanes. A logical
mask freezes their state but does not shrink the tensor seen by a fused kernel.
The unused rectangle between each candidate's completion depth and the deepest
solve is therefore a distinct source of OT work.

We introduce \policy{}, an execution layer that removes this padding. Its name
refers to the complete active-packing pipeline, which removes a problem after
it passes the stopping check used by the selected solver. Candidate-axis
packing applies the original per-instance update independently to each lane. A
full alternating cycle makes the newly updated marginal current in exact
arithmetic, so the other marginal provides a cheap batched screen. Lanes that
pass the subsequent two-sided check are removed from potentials, marginals,
support descriptors, and scratch buffers. The next backend update runs at the
width of the unfinished set.

This design separates inner-kernel acceleration from cross-problem execution.
FlashSinkhorn lowers the IO cost of one stabilized update
\citep{ye2026flashsinkhorn}; \policy{} lowers the number and physical width of
updates executed across a batch. We implement the shrinking active set over
Flash, OTT-JAX, and PyKeOps. The full Flash path additionally exploits the
updated-marginal screen; the OTT-JAX and PyKeOps paths realize the shrinking
active set with independent backend mechanisms.

The mathematical object behind the system is the completion-depth profile.
If lane $t$ exits after $\ell_t$ packed updates, static execution consumes
$W\ell_{\max}$ candidate-update slots and shrinking active execution consumes exactly
$\sum_t\ell_t$. This task-independent identity depends only on the observed
completion depths. A probabilistic form sharpens the condition: the
expected removable work is positive precisely when a window has positive
probability of containing both completed and unfinished lanes at some depth.
Hardware speedup then depends on the measured width-cost curve and the overhead
of checking and compaction.

The design was motivated by parallel PageRank: positive fixed-point dynamics
make completion structure a computational signal. LTGNN likewise uses a
personalized-PageRank fixed-point formulation to scale recommendation training
\citep{zhang2024ltgnn}. For Sinkhorn, quotient Perron--Frobenius geometry is
central to the execution design because it gives a local account of unequal
completion depths. The full modal response depends
on both eigenvalues and proposal projections; a slow eigenvalue alone can badly
overestimate a finite-tolerance tail. The solver's stopping check supplies the
completion observations, and the executor converts the measured depth profile
into a shrinking physical batch.

FlashSinkhorn established the prior state of the art for online GPU EOT against
GeomLoss tensorized, KeOps, and OTT-JAX backends
\citep{ye2026flashsinkhorn}. We use its public serial path as the main reference
and compose its fused Triton, IO-aware, linear-memory updates with physical
active-set contraction. DrainSinkhorn extends that fast inner solver with active execution of
heterogeneous EOT batches. OTT-JAX and PyKeOps provide independent
backend-matched controls. The experiments distinguish
OT-stage acceleration from application endpoints: the former measures the work
removed inside EOT, while the latter tests whether that acceleration survives
unchanged consumer or model work.

Our contributions are:
\begin{enumerate}[leftmargin=*,itemsep=2pt,topsep=3pt]
  \item \textbf{An active-packing algorithm for batched Sinkhorn.} We combine
  candidate-axis packing of independent Sinkhorn maps with verifier-gated
  candidate retirement and physical compaction of all candidate-indexed state. Unlike static packing
  or logical masking, each subsequent kernel executes at the physical width of
  the unfinished set.
  \item \textbf{A depth-to-work theory.} An exact survival-curve identity counts
  removed candidate-update slots, and a task-independent random-depth result
  identifies within-window completion disagreement as the condition for
  positive expected removable work. Quotient PF analysis supplies the local
  convergence basis for treating completion depth as the scheduling object.
  \item \textbf{Causal component evidence.} Two-host Packer19 controls separate
  batched checking, Sinkhorn-specific screening, logical masking, and physical
  compaction. Fixed-work grouping and width interventions identify the
  heterogeneity and batch-width regimes in which physical active-set contraction pays.
  \item \textbf{Backend and endpoint evidence.} MetroPT-3, Packer19, and
  ImageNet-32 evaluate the complete Flash path; OTT-JAX and PyKeOps test transfer
  on ImageNet-32 and A2D2. The resulting paths achieve state-of-the-art
  execution performance on the tested heterogeneous batched-EOT workloads
  within each matched backend family. Numerical behavior is reported through parity,
  residual, and finite-precision stress tests, while complete consumers test
  application-visible output and training quality.
\end{enumerate}

Fast kernels and active-set contraction attack different terms in runtime: the
former makes an update cheaper, while the latter removes updates that belong
only to already completed problems.

\section{Setting and Prior Systems}
\label{sec:setting-related}

\paragraph{Entropic optimal transport.}
For positive unit-mass marginals $a\in\R_+^n$ and $b\in\R_+^m$, EOT solves
\begin{equation}
\min_{\pi\geq0}\ \langle C,\pi\rangle
+\varepsilon\sum_{ij}\pi_{ij}(\log\pi_{ij}-1),\quad
\pi\ones=a,\quad\pi^\top\ones=b.
\label{eq:eot}
\end{equation}
With $K=\exp(-C/\varepsilon)$, Sinkhorn alternates diagonal scalings to form
$\pi=\diag(u)K\diag(v)$ \citep{cuturi2013sinkhorn}. Coordinate, relaxed,
low-rank, and Newton methods reduce arithmetic or iteration count inside one
solve \citep{altschuler2017near,thibault2017overrelaxed,scetbon2021lowrank,kemertas2025truncated}.
FlashSinkhorn rewrites stabilized updates as streaming LogSumExp reductions and
lowers their IO cost \citep{ye2026flashsinkhorn}. \policy{} keeps the chosen
inner update and changes the physical set of problems that receive the next
one.

\paragraph{Initialization and repeated couplings.}
Carry, analytic extensions, and learned dual proposals can reduce a new
problem's starting defect \citep{amos2023meta,thornton2023initialization}.
Large-coupling Flow Matching also uses soft $c$-transforms when supports change
\citep{zhang2025large}. These mechanisms reshape the completion-depth
distribution. DrainSinkhorn accommodates both regimes: the formal Packer19
attribution uses the same fixed proposal in every arm, and the ImageNet-32
training experiment uses independent cold-start couplings.

\paragraph{Dynamic batching and batched solvers.}
Automatic batching tracks the program point of each member in a data-dependent
program \citep{radul2020autobatching}. Orca schedules autoregressive inference
one iteration at a time \citep{yu2022orca}. Ginkgo fuses uniform independent
linear systems and records per-item convergence \citep{ginkgoBatched}. jaxipm
redesigns IPOPT around heterogeneous iteration fusion and replacement of
completed problems \citep{viljoen2026jaxipm}. We target scalable GPU EOT
execution with online pairwise interactions instead of a materialized cost
tensor for every candidate.

DrainSinkhorn addresses the numerical structure of EOT batches. Alternating
scaling exposes a one-sided screen; the existing two-marginal stopping check is
evaluated on candidate outputs; physical compaction must preserve the lane
correspondence of potentials, marginals, supports, and scratch state. Generic
masking leaves those EOT-specific state transformations and the physical width
unchanged. \Cref{tab:prior-systems} compares the batched object, completion
signal, and execution action of prior systems. Our OTT-JAX and PyKeOps variants
transfer the shrinking active set, whereas the full Flash realization also uses
the updated-marginal screen.

\paragraph{OT consumers.}
OTT provides a broad transport API \citep{cuturi2022ott}. Multisample Flow
Matching consumes an OT coupling to select training pairs
\citep{pooladian2023multisample}; these couplings can be prepared independently
of network optimization \citep{zhang2025large}. We evaluate the complete
coupling-to-consumer boundary for industrial endpoints and carry the same
execution path through a fixed-update feature-space OT-FM training pipeline.

\begin{table*}[t]
\centering
\caption{DrainSinkhorn specializes active execution to Sinkhorn state. The table records each system's completion signal and physical action rather than treating a shared batch dimension as the contribution.}
\label{tab:prior-systems}
\scriptsize
\setlength{\tabcolsep}{3.2pt}
\begin{tabular}{@{}p{0.13\textwidth}p{0.18\textwidth}p{0.22\textwidth}p{0.21\textwidth}p{0.18\textwidth}@{}}
\toprule
System & Batched numerical object & Completion signal & Execution action & Output check \\
\midrule
Ginkgo batched CG \citep{ginkgoBatched}
& Uniform independent linear systems
& Per-system convergence state
& Skips work for a converged system inside the fused solver
& Linear-system residual criterion \\
jaxipm \citep{viljoen2026jaxipm}
& Independent nonlinear programs
& Heterogeneous iteration fusion
& Replaces completed problems to sustain throughput
& Nonlinear-program convergence checks \\
\policy{}
& Positive EOT problems with compatible packed state
& One-sided Sinkhorn screen followed by the configured two-sided check
& Removes passing lanes from potentials, supports, marginals, and scratch buffers
& Current two-marginal residual tolerance \\
\bottomrule
\end{tabular}
\end{table*}

\section{DrainSinkhorn}
\label{sec:method}

\paragraph{Active packing versus static packing.}
DrainSinkhorn integrates candidate-axis packing, batched screening,
verifier-gated candidate retirement, and physical compaction as one
active-packing algorithm. Candidate-axis packing
first executes $W$ independent Sinkhorn maps in one packed kernel. Static
packing retains physical width $W$ until the deepest solve completes. If lane
$t$ exits after $\ell_t$ packed updates, active packing applies the configured
two-sided verifier, removes passing candidates, and physically compacts all
candidate-indexed state, so packed update $k$ executes at survivor width
\[
A_k=\left|\{t:\ell_t\geq k\}\right|.
\]
Logical masking records completion at width $W$; active packing executes
subsequent kernels at survivor width $A_k$. Width-one/static/active controls decompose the complete method:
width-one versus static isolates candidate-axis packing, while static versus
active isolates verifier-gated retirement and physical compaction.

\subsection{Batching the original Sinkhorn map}

Let $F_t$ denote one full Sinkhorn cycle for problem $t$, including that
problem's marginals, kernel operator, and numerical representation. Candidate
packing adds a program axis but no cross-candidate reduction. In ideal
arithmetic, its defining algebraic property is
\begin{equation}
\widetilde F\!\left(\operatorname{pack}(z_1,\ldots,z_W)\right)
=\operatorname{pack}\!\left(F_1(z_1),\ldots,F_W(z_W)\right).
\label{eq:pack-commutes}
\end{equation}
Thus the packed kernel preserves the EOT objective and per-instance Sinkhorn
map while changing their execution layout. Verifier-gated retirement filters the
candidate axis after a stopping check and applies
Equation~\eqref{eq:pack-commutes} to the survivors.

For one candidate, a Sinkhorn cycle performs
\begin{equation}
u^{k+1}=a\oslash(Kv^k),\qquad
v^{k+1}=b\oslash(K^\top u^{k+1}),
\label{eq:sinkhorn-cycle}
\end{equation}
where $\oslash$ denotes elementwise division. Immediately after the second
leg, exact arithmetic gives
\begin{equation}
v^{k+1}\odot(K^\top u^{k+1})=b.
\label{eq:updated-marginal}
\end{equation}
The column marginal is current by construction; the row marginal
$u^{k+1}\odot(Kv^{k+1})$ can remain outside tolerance. DrainSinkhorn evaluates
that row residual for all active lanes in one batched operation and sends only
screen-positive lanes to the baseline two-sided stopping check.

\begin{figure*}[t]
\centering
\setlength{\fboxsep}{6pt}
\fbox{\parbox{.13\textwidth}{\centering
\textbf{Packed update}\\[-1pt]
$u\leftarrow a/(Kv)$\\
$v\leftarrow b/(K^\top u)$}}
\hspace{2pt}$\rightarrow$\hspace{2pt}
\fbox{\parbox{.13\textwidth}{\centering
\textbf{One-sided screen}\\[-1pt]
batch row residuals\\
updated column identity}}
\hspace{2pt}$\rightarrow$\hspace{2pt}
\fbox{\parbox{.13\textwidth}{\centering
\textbf{Two-sided check}\\[-1pt]
current row and column\\
stopping tolerance}}
\hspace{2pt}$\rightarrow$\hspace{2pt}
\fbox{\parbox{.13\textwidth}{\centering
\textbf{Verified retirement}\\[-1pt]
emit passing lanes\\
compact all OT state}}
\hspace{2pt}$\rightarrow$\hspace{2pt}
\fbox{\parbox{.11\textwidth}{\centering
\textbf{Narrower next kernel}\\[-1pt]
$W_k\to W_{k+1}$}}
\caption{DrainSinkhorn's execution loop. The one-sided screen avoids many full
plan checks. Passing lanes leave every candidate-indexed tensor; the next
packed update processes only the survivors.}
\label{fig:active-loop}
\end{figure*}

\subsection{Screen, verify, and compact}

Let $\widehat r^{\mathrm{row}}_{t,k}$ be the batched screen value. A lane is
screen-positive when
$\widehat r^{\mathrm{row}}_{t,k}\leq\tau_{\mathrm{screen}}$; otherwise it
continues without materializing the more expensive plan-level check. For a
screen-positive lane, the implementation recomputes both marginals and applies
the configured stopping rule
\begin{equation}
r_{t,k}=\max\!\left\{
\|\pi_{t,k}\ones-a_t\|_1,
\|\pi_{t,k}^\top\ones-b_t\|_1
\right\}\leq\tau_{\mathrm{stop}}.
\label{eq:release-contract}
\end{equation}
The C80 Flash campaign uses $9.5\times10^{-4}$ for both the screen and this
two-sided check and reports an output tolerance of $10^{-3}$. A rejected lane
remains active for a later audit. A false-positive screen costs an extra full
check; a false-negative screen delays removal. These effects are measured by
the shadow-replay experiment in \cref{sec:experiments}.

For each passing lane, DrainSinkhorn writes the output and compacts every
candidate-indexed object: source and target descriptors when candidate
specific, marginals, dual potentials, centered shifts, log weights, and scratch
buffers. The next packed update sees only surviving lanes. The logical-mask
control uses the same update, screen, two-sided check, tolerance, and observed
completion times. It freezes a passing lane's potentials but retains the lane
in the physical candidate tensor, so subsequent kernels still execute at the
original width. Logical masking versus physical compaction therefore isolates
the work removed by shrinking the physical active set.

Equation~\eqref{eq:pack-commutes} and
Equation~\eqref{eq:updated-marginal} define the ideal-arithmetic transformation.
Fixed-work parity, achieved residuals, near-threshold stress tests, and
application endpoints characterize its implementation behavior.

\paragraph{Verifier-gated semantic safety.}
We use \emph{safe} in one operational sense: a candidate is removed only after
passing the same configured two-sided stopping check used by the selected
backend. The one-sided screen decides only whether to invoke that check; it
cannot authorize removal. DrainSinkhorn therefore preserves the backend's
configured release authority while changing how unfinished candidates are
scheduled and stored. The timed Flash path applies this rule in the precision
supported by that backend. FP64 appears only as an independent replay and audit
reference in the separate C10 numerical stress test; it is not part of the
timed Flash execution path.

\subsection{Execution modes and fallback}

The implementation exposes four execution modes. Width-one executes the
project backend one problem at a time. Fixed-width packing updates every lane
until a common final boundary. Logical masking records per-lane completion and
freezes completed states without narrowing the kernel. Physical compaction
removes passing lanes. Width determines whether a packed kernel is efficient,
and the completion-depth survival curve determines how much width physical compaction can
remove. Formal campaigns fix execution mode, width, audit interval, and
fallback before timing. C35 and M3 measure the operating surface. Numerical
anomalies and iteration-limit events return to the matched sequential path.

When a complete problem fits on one accelerator, different windows are sent to
independent complete-device workers, giving each device an independent update
path and avoiding per-update collectives. Row sharding remains a capacity path
for a single oversized problem; the main scale-out path assigns complete
problems to independent devices.

\section{Completion Depths: Convergence and Eliminated Work}
\label{sec:cost-hypothesis}

The executor is organized around completion depth rather than a single
worst-case iteration count. The identities below convert observed completion
depths into removable work. The nonlinear Perron--Frobenius analysis is central
to this design: it shows locally why finite-tolerance depths depend jointly on
contraction modes and the proposal's projection onto those modes, motivating an
executor that follows the measured survivor set instead of assuming uniform
depth. Release authority remains the configured two-sided stopping check.

\subsection{Exact depth-to-work identities}

Let $\ell_t$ be the first production audit at which candidate $t$ passes the
configured two-sided check, and let
\begin{equation}
A_k=\bigl|\{t:\ell_t\geq k\}\bigr|
\label{eq:survival-width}
\end{equation}
be the surviving width before packed update $k$. The measured stopping check
supplies both quantities.

\begin{proposition}[Depth-profile work identity]
For a finite window of $W$ candidates,
\begin{equation}
S_{\mathrm{active}}
=\sum_{k=1}^{\ell_{\max}}A_k
=\sum_{t=1}^{W}\ell_t,
\qquad
S_{\mathrm{static}}=W\ell_{\max}.
\label{eq:slot-identity}
\end{equation}
Hence physical active-set contraction removes exactly
\begin{equation}
R=W\ell_{\max}-\sum_{t=1}^{W}\ell_t
\label{eq:removable-slots}
\end{equation}
candidate-update slots relative to a fixed-width batch with the same final
boundary.
\end{proposition}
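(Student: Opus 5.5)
The plan is a double-counting argument on the set of candidate-update slots. I will first fix the accounting model implied by the execution modes: under shrinking active execution, candidate $t$ receives packed updates $1,\ldots,\ell_t$ and none after it passes the two-sided check. Under static execution, every one of the $W$ lanes receives all updates $1,\ldots,\ell_{\max}$, where $\ell_{\max}=\max_t\ell_t$ is the common final boundary. Each $\ell_t$ is a positive integer, so $\ell_{\max}$ is finite and all sums are finite.

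For the active count, introduce the indicator set
\[
\mathcal S=\{(t,k):1\le t\le W,\ 1\le k\le \ell_t\}\subseteq\{1,\ldots,W\}\times\{1,\ldots,\ell_{\max}\},
\]
whose elements are exactly the executed candidate-update slots. I will count $|\mathcal S|$ in two ways. Summing over $k$ first, the fiber over a fixed $k$ is $\{t:\ell_t\ge k\}$, whose size is $A_k$ by \eqref{eq:survival-width}, so $|\mathcal S|=\sum_{k=1}^{\ell_{\max}}A_k$. Summing over $t$ first, the fiber over a fixed $t$ is $\{1,\ldots,\ell_t\}$, so $|\mathcal S|=\sum_{t=1}^W\ell_t$. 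This is the discrete layer-cake (tail-sum) formula $\sum_t\ell_t=\sum_{k\ge1}|\{t:\ell_t\ge k\}|$. It is enough to sum $k$ only up to $\ell_{\max}$, because $A_k=0$ for $k>\ell_{\max}$.

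For the static count, the slot set is the full rectangle $\{1,\ldots,W\}\times\{1,\ldots,\ell_{\max}\}$, so $S_{\mathrm{static}}=W\ell_{\max}$. Because $\mathcal S$ is contained in this rectangle, subtracting gives $R=W\ell_{\max}-\sum_t\ell_t$, which is \eqref{eq:removable-slots}. Equivalently, $R=\sum_{k=1}^{\ell_{\max}}(W-A_k)=\sum_t(\ell_{\max}-\ell_t)\ge0$, and this form also shows that $R=0$ exactly when all depths coincide.

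The only step needing care is the modeling claim that active execution performs precisely the slots in $\mathcal S$. This depends on lane independence under packing, \eqref{eq:pack-commutes}: retiring some lanes does not change the trajectories of the survivors, so each $\ell_t$ is the same under active and static execution. It also depends on the convention that the audit at which lane $t$ passes occurs after its $\ell_t$-th update and before update $\ell_t+1$. Once these two conventions are stated, the identity is purely combinatorial.
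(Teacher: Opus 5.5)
Your proof is correct and is essentially the paper's argument. Counting the slot set $\mathcal S$ by rows and by columns is exactly the paper's step of writing $\ell_t=\sum_{k\ge1}\mathbf 1\{\ell_t\ge k\}$ and exchanging the two finite sums; the static count likewise comes from all $W$ lanes running through $\ell_{\max}$. Your remarks on lane independence and audit timing spell out accounting conventions the paper leaves implicit, and the form $R=\sum_t(\ell_{\max}-\ell_t)\ge0$ anticipates the nonnegativity the paper uses in its next proposition.
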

\begin{proof}
Use $\ell_t=\sum_{k\geq1}\mathbf 1\{\ell_t\geq k\}$ and exchange the two finite
sums. Fixed-width execution keeps all $W$ lanes through $\ell_{\max}$.
\end{proof}

Logical masking changes which states update but not the physical width seen by
the kernel, so it retains $W\ell_{\max}$ slots. The area between the static
rectangle and the survival curve is therefore the exact post-run work removed
by compaction. Audit spacing is already reflected in the observed $\ell_t$; an
ideal first-passage depth on a denser grid would define a different quantity.

\begin{proposition}[Task-independent opportunity condition]
Let $(\ell_1,\ldots,\ell_W)$ be any random vector of finite completion depths.
Then $\mathbb E[R]>0$ if and only if there exists a depth $k$ such that
\begin{equation}
\Pr(0<A_k<W)>0.
\label{eq:joint-depth-condition}
\end{equation}
\end{proposition}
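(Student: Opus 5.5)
The plan is to rewrite $R$ pathwise as a sum over depths of a nonnegative quantity that is positive exactly when $0<A_k<W$, and then take expectations. By the Depth-profile work identity, on every realization
\[
R=W\ell_{\max}-\sum_{k=1}^{\ell_{\max}}A_k=\sum_{k=1}^{\ell_{\max}}(W-A_k)=\sum_{k\ge1}(W-A_k)\,\mathbf 1\{A_k\ge1\}.
\]
The last equality uses that $A_k\ge1$ if and only if $k\le\ell_{\max}$, since $A_k$ is nonincreasing in $k$. Define the summand $D_k=(W-A_k)\mathbf 1\{A_k\ge1\}$. It takes values in $\{0,\ldots,W-1\}$, and $D_k>0$ exactly on the event $\{0<A_k<W\}$. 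Thus $R=\sum_{k\ge1}D_k$ is a countable sum of nonnegative integer random variables. Each realization has only finitely many nonzero terms, because every $\ell_t$ is finite.

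Next I apply Tonelli (monotone convergence) to get $\mathbb E[R]=\sum_{k\ge1}\mathbb E[D_k]$. This is valid in $[0,\infty]$ with no integrability assumption, and that is why I do not need $\mathbb E[\ell_{\max}]<\infty$.

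For the direction ($\Leftarrow$), suppose $\Pr(0<A_k<W)>0$ for some $k$. On that event $D_k\ge1$, so $\mathbb E[D_k]\ge\Pr(0<A_k<W)>0$, and therefore $\mathbb E[R]\ge\mathbb E[D_k]>0$. If $\mathbb E[R]=\infty$ the conclusion still holds in the extended sense.

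For the direction ($\Rightarrow$), suppose no such $k$ exists. Then $D_k=0$ almost surely for every $k$. Since there are countably many $k$, almost surely all $D_k$ vanish simultaneously, so $R=0$ almost surely and $\mathbb E[R]=0$. Equivalently, $\mathbb E[R]=\sum_k\mathbb E[D_k]=0$ directly.

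The main obstacle is only bookkeeping. First, the identity must be stated pathwise with a random upper limit $\ell_{\max}$; writing $D_k$ with the indicator $\mathbf 1\{A_k\ge 1\}$ converts it into a sum over all $k\ge1$ with deterministic range, so it can be exchanged with expectation. Second, I should note the interpretation that $\mathbb E[R]>0$ is meant in $[0,\infty]$. Third, I should remark that the condition is genuinely about joint behavior within the window: $0<A_k<W$ means some lane finished before $k$ while another has not. The marginal laws of the individual $\ell_t$ alone do not determine it, which is the "within-window disagreement" reading stated in the introduction.
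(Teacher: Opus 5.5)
Your proof is correct, and it reaches the result by a somewhat different organization than the paper's.

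The paper argues through the zero set of $R$. Pathwise, $R\ge0$, and $R=0$ exactly when all depths coincide. Unequal depths are equivalent to the existence of a level $k$ with $0<A_k<W$. The passage from this pathwise statement to the probabilistic one is left implicit: a nonnegative variable has positive mean iff it is positive with positive probability, and a countable union of null events is null.

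You instead decompose $R=\sum_{k\ge1}(W-A_k)\mathbf 1\{A_k\ge1\}$ level by level. Each summand is positive exactly on $\{0<A_k<W\}$, and you exchange sum and expectation by Tonelli.

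Both arguments rest on the same observation, but your decomposition buys three things:
\begin{enumerate}
\item It bypasses the intermediate ``all depths equal'' characterization.
\item It makes the countable-union step explicit instead of leaving it implicit.
\item It is quantitative. Each summand lies in $\{0,\ldots,W-1\}$ and is at least $1$ on its support, so you get $\sum_k\Pr(0<A_k<W)\le\mathbb E[R]\le(W-1)\sum_k\Pr(0<A_k<W)$, which the paper's argument does not provide.
\end{enumerate}

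The paper's version is shorter. It also states the structural meaning directly: removable work vanishes exactly for synchronized windows.
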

\begin{proof}
$R$ is nonnegative and is zero exactly when every lane has the same completion
depth. Unequal finite depths create a level $k$ with both survivors and
completed lanes, and the converse follows directly from the definition of
$A_k$.
\end{proof}

The condition depends on the joint depth law inside a packed window. Perfectly
synchronized lanes create no removable work even when completion depths vary
across windows.

\subsection{Nonlinear Perron--Frobenius basis for local convergence}

Sinkhorn scalings are not unique: multiplying one scaling by a positive
constant and dividing the other by the same constant leaves the coupling
unchanged. In log coordinates this is the one-dimensional gauge direction
$(f,g)\mapsto(f+c\ones,g-c\ones)$. Its eigenvalue is one, but it does not
represent a change in the transport plan. We therefore remove this direction
and analyze the Jacobian on the gauge quotient; the nontrivial modes below
start at $j=2$.

At a strictly positive fixed point with coupling $\pi^\star$, define the
normalized transfer
$P=D_a^{-1/2}\pi^\star D_b^{-1/2}$. After the marginal-weighted change of
coordinates, one linearized half-step acts by $P$ between the two tangent
spaces. The reverse half-step is its adjoint $P^\ast$. One full alternating Sinkhorn cycle composes
the forward and reverse transfers, which gives the weighted quotient Jacobian
$J^\star=P^\ast P$. Hence $J^\star$ is positive semidefinite in the associated
weighted inner product and admits modes
$J^\star\phi_j=\lambda_j\phi_j$.

Write the proposal error as
$e_0=\sum_{j\geq2}\alpha_j\phi_j$, where
$\alpha_j=\langle e_0,\phi_j\rangle_\star$. Thus $\alpha_j$ measures proposal
alignment: it is the amount of the initial error placed in mode $j$. In the
tangent model, the squared fixed-point defect after $k$ cycles is
\begin{equation}
\left\|(I-J^\star)(J^\star)^k e_0\right\|_\star^2
=\sum_{j\geq2}(1-\lambda_j)^2\alpha_j^2\lambda_j^{2k}.
\label{eq:modal-response}
\end{equation}
This expression also shows why $\lambda_2$ alone cannot predict a
finite-tolerance completion depth. Its slow mode matters only when
$\alpha_2$ is appreciable; another mode can dominate at the relevant finite
$k$, and the factor $(1-\lambda_j)$ changes how eigenmodes appear in the
measured defect. Outside the tangent regime, a nonlinear remainder and the
conversion from quotient error to the actual marginal residual also affect the
first tolerance crossing. The full spectrum and the proposal projections
therefore provide a local mechanism for depth variation, while the configured
two-sided stopping check supplies the execution depths used by the algorithm.

\subsection{Hardware realization and break-even}

Wall time maps these slots through the backend's width-dependent cost. Let
$c_k(A_k,\xi_k)$ be the measured cost of update $k$, where $\xi_k$ records
support shape, data type, and backend state. Then
\begin{equation}
\begin{split}
T_{\mathrm{active}}={}&\sum_k c_k(A_k,\xi_k)
+H_{\mathrm{screen}}+H_{\mathrm{check}}+H_{\mathrm{compact}},
\end{split}
\label{eq:time-accounting}
\end{equation}
while fixed-width execution replaces $A_k$ with $W$ and omits compaction. The
break-even condition is
\begin{equation}
\sum_k\!\left[c_k(W,\xi_k)-c_k(A_k,\xi_k)\right]
>H_{\mathrm{screen}}+H_{\mathrm{check}}+H_{\mathrm{compact}}.
\label{eq:active-gate}
\end{equation}
Equation~\eqref{eq:slot-identity} is exact work accounting. Measuring the
conditional width-cost curve turns Equation~\eqref{eq:time-accounting} into a
workload-specific timing model. Our experiments estimate this operating surface
through direct interventions on width and grouping.

\section{Experiments}
\label{sec:experiments}

All speedups are baseline time divided by DrainSinkhorn time. OT-stage time is
measured directly or reconstructed by subtracting the paired consumer time from
the encompassing recorded interval. For reconstructed rows, consumer time is
subtracted within each paired timing unit before ratios, intervals, or bootstrap
summaries are formed. Matched component comparisons hold inputs, proposal,
tolerance, audit grid, two-sided check, and consumer fixed. Complete-path comparisons
measure the deployed OT path on the same inner-backend family. When stopping
controllers produce different achieved residuals, the achieved residuals and
consumer-output deviations are reported together.

Accordingly, the reported ratios have three distinct roles:
\[
\begin{gathered}
\underbrace{\text{complete OT speedup}}_{\text{complete-method value}},\\[-1pt]
\underbrace{\text{width-one}\rightarrow\text{static}}_{\text{packing attribution}},\\[-1pt]
\underbrace{\text{static/logical}\rightarrow\text{active}}_{\text{retirement-compaction attribution}}.
\end{gathered}
\]
The first ratio measures complete-path performance against an external
reference; the latter two decompose DrainSinkhorn's packed execution.

\subsection{Complete-algorithm OT-stage acceleration}

\begin{figure*}[t]
\centering
\includegraphics[width=\textwidth]{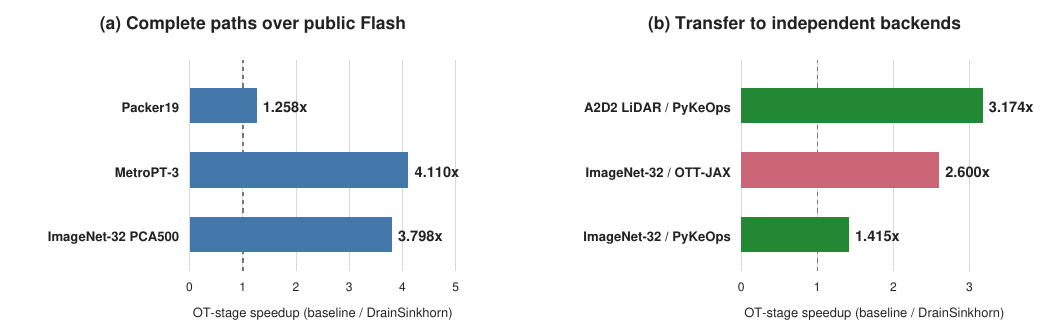}
\caption{Complete-algorithm OT-stage acceleration on real workloads. Panel (a)
compares the complete DrainSinkhorn path with the public Flash path. MetroPT-3
and ImageNet-32 use the same $10^{-3}$ stopping contract in both paths;
Packer19 uses the same current two-sided residual controller at
$\tau=10^{-3}$.
Panel (b) implements the execution principle independently over OTT-JAX and
PyKeOps. Every bar is a complete OT-stage ratio; matched component effects are
reported later in \cref{tab:c80-attribution}.}
\label{fig:ot-total-speedups}
\end{figure*}

\begin{table*}[t]
\centering
\caption{Complete OT-stage comparisons on real data across inner
backends. Every ratio is baseline time divided by DrainSinkhorn time, and every
uncertainty label states whether it is a bootstrap interval or an observed
range. Timing repeats and GPU placements stabilize timing; they are not
independent workload samples. For Packer19, H--O--G denotes
host--order--GPU. The 24 Packer19
pairs share one frozen biological workload and quantify timing reproducibility,
not biological-sample variability. Ratios below one mean that the external
baseline is faster.}
\label{tab:ot-cross-backend}
\scriptsize
\setlength{\tabcolsep}{3.0pt}
\resizebox{\textwidth}{!}{%
\begin{tabular}{@{}llllll@{}}
\toprule
Backend & Workload & OT-stage contrast & Speedup (uncertainty) & Timing; unit & Accuracy contract \\
\midrule
Flash & Packer19, $n=16384,W=8$ & public same-controller / complete & \CPackerSameControllerOTSpeedup{} (95\% CI \CPackerSameControllerOTCI{}) & direct E1; 24 H--O--G pairs & identical $10^{-3}$ current-residual controller \\
Flash & MetroPT-3, $n=16384,W=16$ & public serial / complete & \CMetroOTDeploymentSpeedup{} (95\% CI \CMetroOTDeploymentCI{}) & E2$-$consumer; 6 host--plan pairs & both residuals $<10^{-3}$ \\
Flash & ImageNet-32, $n=1024,W=8$ & public serial / complete & \CFeatureTrainingActiveOTSpeedup{} (seed range \CFeatureTrainingActiveOTRange{}) & E3 solver field; 3 seeds & identical $10^{-3}$ verifier \\
OTT-JAX & ImageNet-32, $n=4096,W=16$ & native vmap / phase-active & \COTTNativeActiveSpeedup{} (repeat range \COTTNativeActiveRange{}) & direct E1; 6 timing repeats & max L1 $9.97\!\times\!10^{-4}$ \\
PyKeOps & A2D2 LiDAR, $n=8192,W=8$ & sequential carry / active & \CKeopsLiDARActiveSpeedup{} (clip range \CKeopsLiDARActiveRange{}) & direct E1; 3 real clips & max L1 $9.97\!\times\!10^{-4}$ \\
PyKeOps & ImageNet-32, $n=16384,W=4$ & sequential / active & \CKeopsImageNetActiveSpeedup{} (seed range \CKeopsImageNetActiveRange{}) & direct E1; 3 seeds & max L1 $8.93\!\times\!10^{-4}$ \\
\bottomrule
\end{tabular}}
\end{table*}

\begin{table*}[t]
\centering
\caption{Absolute OT-stage time (seconds) for the three headline Flash workloads at the
$10^{-3}$ operating point. Packer19 reports the median over 24 paired
host--order--GPU units. ImageNet-32 reports solver-phase mean $\pm$ sample
standard deviation over three paired training seeds. MetroPT-3 reports the
median of the three registered worker-1 plan units, the common stratum in which
public, static, and complete active paths retain absolute E1 fields; the
headline MetroPT ratio instead uses the six valid units from two physical hosts.
A dash means that the
path was not run in the registered campaign; it does not denote zero time or a
failed run. The final column is the complete DrainSinkhorn path, which includes
verifier-gated candidate removal and physical compaction.}
\label{tab:absolute-ot-times}
\small
\resizebox{\textwidth}{!}{%
\begin{tabular}{@{}lccccc@{}}
\toprule
Workload & Public serial & Project width-one & Static packing & Logical masking & Complete DrainSinkhorn \\
\midrule
Packer19 ($n=16384,W=8$) & 1.632 & 1.639 & 1.296 & 2.038 & 1.299 \\
MetroPT-3 ($n=16384,W=16$) & 34.772 & --- & 14.839 & --- & 8.462 \\
ImageNet-32 PCA500 training & $440.33\pm15.58$ & $375.84\pm6.94$ & $122.17\pm2.91$ & --- & $115.89\pm0.83$ \\
\bottomrule
\end{tabular}}
\end{table*}

The primary matched-backend comparison uses public FlashSinkhorn serial on
MetroPT-3 and ImageNet-32 feature couplings under the same EOT objective and
$10^{-3}$ two-sided residual tolerance. FlashSinkhorn is the prior
state-of-the-art online GPU EOT backend, with fused, IO-aware updates that
outperform GeomLoss tensorized, KeOps, and OTT-JAX implementations
\citep{ye2026flashsinkhorn}. DrainSinkhorn is
\CMetroOTDeploymentSpeedup{} and \CFeatureTrainingActiveOTSpeedup{} faster on
those two matched cells. Packer19 supplies a controller-matched tolerance sweep;
OTT-JAX and PyKeOps test backend transfer. These results establish execution
gains within each matched backend.

\paragraph{Matched-controller complete paths.}
On MetroPT-3, the complete DrainSinkhorn OT path is
\CMetroOTDeploymentSpeedup{} faster than public Flash serial, with both paths
ending in the $10^{-3}$ residual band. A reanalysis of the six valid
host--plan units gives 95\% CI \CMetroOTDeploymentCI{} with 6/6 wins. The matched static/active
control assigns \CMetroRetirementOTSpeedup{} to verifier-gated retirement and physical compaction; candidate
slots fall from \CMetroSlotsStatic{} to \CMetroSlotsActive{}, and OT-window
energy falls by \CMetroEnergyFactor{}.

In C63, every training step uses a real ImageNet-32 PCA500 feature coupling.
Across three paired seeds, the complete DrainSinkhorn solver is
\CFeatureTrainingActiveOTSpeedup{} faster than public Flash under the same
$10^{-3}$ stopping tolerance; \cref{tab:absolute-ot-times} reports the four
solver-phase times. The registered \CFeatureTrainingStaticOverWidthOne{}
width-one/static and \CFeatureTrainingActiveOverStatic{} static/active ratios
(range \CFeatureTrainingActiveOverStaticRange{}) are measured on the complete
fixed-update training interval, not on the solver field alone. They quantify
how the two nested parts of DrainSinkhorn affect the C63 training endpoint:
static packing is the fixed-width base, and verifier-gated candidate removal
plus physical compaction turn that base into shrinking active packing. They do
not partition the separate \CFeatureTrainingActiveOTSpeedup{} solver-only ratio.

\begin{table*}[t]
\centering
\caption{ImageNet-32 PCA500 feature-space quality after 50k training updates.
Entries are three-seed means for public Flash serial (Public) and the complete
DrainSinkhorn path (Complete). Lower is better for all three metrics. The
predeclared non-inferiority tests use upper bounds of $1.02$ for curvature,
$1.01$ for held-out strict-coupling FM MSE, and $1.10$ for projected
Fr\'echet; all nine Complete/Public tests pass. These are feature-space metrics,
not pixel-space FID.}
\label{tab:imagenet-quality}
\small
\begin{tabular}{@{}c cc cc cc@{}}
\toprule
& \multicolumn{2}{c}{Curvature} & \multicolumn{2}{c}{Held-out FM MSE} & \multicolumn{2}{c}{Projected Fr\'echet} \\
NFE & Public & Complete & Public & Complete & Public & Complete \\
\midrule
4  & 5.4630 & 5.4448 & 1.07842 & 1.07859 & 1.3278 & 1.2886 \\
8  & 2.8300 & 2.8356 & 1.07842 & 1.07859 & 1.5937 & 1.5224 \\
16 & 1.4476 & 1.4528 & 1.07842 & 1.07859 & 1.9004 & 1.8136 \\
\bottomrule
\end{tabular}
\end{table*}

The numerical quality results make the C63 endpoint check explicit rather
than reducing it to a pass count. At NFE 4, 8, and 16, the complete method
passes all nine predeclared non-inferiority comparisons against public Flash.
The evaluation is a real ImageNet-32 PCA500 feature-space OT-FM endpoint; it
does not claim pixel-space FID.

On Packer19, the complete DrainSinkhorn path is
\CPackerSameControllerOTSpeedup{} faster than public serial at $\tau=10^{-3}$
(95\% CI \CPackerSameControllerOTCI{}); all 24 paired units favor the complete
path. Across the five registered residual tolerances, the corresponding ratio
stays between \CPackerSameControllerRange{}. This comparison fixes the current
two-sided residual controller, audit interval, input, and E1 endpoint. Consumer
TV accompanies the solver ratio as an output-deviation diagnostic.

An independently frozen same-controller replication corroborates the
$\tau=10^{-3}$ result at
\CPackerSameControllerReplication{} (95\% CI
\CPackerSameControllerReplicationCI{}) with 24/24 paired wins.

\paragraph{Transfer across backend families.}
The non-Flash rows test the execution principle independently of the Flash
kernel. Phase-active execution is \COTTNativeActiveSpeedup{} faster than native
OTT-JAX on real ImageNet-32 at $W=16$; its matched static/active control assigns
\COTTStaticActiveSpeedup{} to phase-boundary retirement and rebucketing. On real
A2D2 LiDAR, the PyKeOps active path is \CKeopsLiDARActiveSpeedup{} faster than
sequential carry at $W=8$. A real ImageNet-32 PyKeOps cell remains positive at
\CKeopsImageNetActiveSpeedup{}. The tiled-log implementation reaches $n=65536$
without materializing candidate kernels and supplies the measured capacity
boundary. Together, these paths establish state-of-the-art execution results
on the tested heterogeneous batched-EOT tasks within their matched backend
families.

\subsection{Packer19 stopping and execution Pareto}

The two-host EXP-PACKER19 campaign evaluates residual tolerances
$10^{-4}$, $3\times10^{-4}$, $10^{-3}$, $3\times10^{-3}$, and $10^{-2}$
without aggregating across tolerances. All arms use the same current two-sided
residual controller and the same tight-reference consumer output.

\begin{table*}[t]
\centering
\caption{Two-host Packer19 stopping--execution Pareto under a common current
two-sided residual controller. Ratios are baseline time divided by candidate
time. Each width-one/static and official/physical-compaction cell has 24/24
paired wins; LM/PC reports its equal-host bootstrap 95\% interval. The
$10^{-2}$ LM/PC cell is neutral because all lanes first pass together at update
8, leaving no completion-depth heterogeneity to exploit. Consumer TV is
measured against the campaign's tight reference and is an output-deviation
diagnostic, not a biological-accuracy criterion.}
\label{tab:packer-pareto}
\scriptsize
\setlength{\tabcolsep}{3.2pt}
\resizebox{\textwidth}{!}{%
\begin{tabular}{@{}rrrrrr@{}}
\toprule
$\tau$ & Width-one / static & Official / PC & LM / PC (95\% CI) & Max actual residual & Max consumer TV \\
\midrule
$10^{-4}$ & 1.272$\times$ & 1.270$\times$ & 1.558$\times$ [1.557, 1.559] & $9.67\times10^{-5}$ & $3.61\times10^{-5}$ \\
$3\times10^{-4}$ & 1.266$\times$ & 1.262$\times$ & 1.638$\times$ [1.636, 1.639] & $2.93\times10^{-4}$ & $1.20\times10^{-4}$ \\
$10^{-3}$ & 1.263$\times$ & 1.258$\times$ & 1.571$\times$ [1.570, 1.573] & $9.67\times10^{-4}$ & $5.01\times10^{-4}$ \\
$3\times10^{-3}$ & 1.258$\times$ & 1.250$\times$ & 1.480$\times$ [1.478, 1.482] & $2.75\times10^{-3}$ & $1.52\times10^{-3}$ \\
$10^{-2}$ & 1.280$\times$ & 1.267$\times$ & 0.9995$\times$ [0.9987, 1.0003] & $9.48\times10^{-3}$ & $4.88\times10^{-3}$ \\
\bottomrule
\end{tabular}}
\end{table*}

\begin{figure}[t]
\centering
\includegraphics[width=\columnwidth]{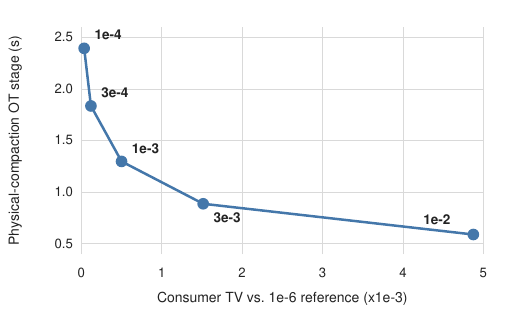}
\caption{Tolerance-dependent Packer19 OT speed--output frontier for the
physical-compaction path. Each point is the median E1 OT-stage time over 24
paired units at that residual tolerance; consumer TV is measured against the
campaign's $10^{-6}$ tight reference. The reference timing is excluded.}
\label{fig:packer-pareto}
\end{figure}

Static packing improves over width-one execution by
\CPackerPackingRange{} across all five tolerances, and the complete path
improves over public same-controller serial by \CPackerSameControllerRange{}.
Physical compaction improves over logical masking by
\CPackerCompactionRange{} at $10^{-4}$ through $3\times10^{-3}$. At $10^{-2}$,
all eight lanes first pass at update 8, so there is no removable depth padding
and LM/PC is neutral at \CPackerNeutralCompaction{} (95\% CI
\CPackerNeutralCompactionCI{}). This null cell locates the measured boundary:
uniform release depths leave no padding to remove.

The appendix maps the public native potential-change threshold to actual
residual, consumer TV, and E1 time. Its table reports the mapping between two
distinct stopping semantics. The complete application ratios for MetroPT-3 and ImageNet-32 feature OT-FM are
\CMetroEndpointSpeedup{} and \CFeatureTrainingEndpointSpeedup{}; all endpoint
quality gates pass. These E2/E3 ratios include unchanged consumer, model, or
training work and are therefore lower than the corresponding OT-stage gains.

\subsection{Matched component attribution}

\paragraph{Packing and physical compaction.}
The earlier single-tolerance C80-LM control gives \CPackerCompactionOTSpeedup{} on
Packer19, while the new five-tolerance campaign gives
\CPackerCompactionRange{} whenever completion depths differ. The matched
MetroPT-3 control gives \CMetroRetirementOTSpeedup{}. In the C63 ImageNet-32
four-arm training ablation, the complete fixed-update interval is
\CFeatureTrainingStaticOverWidthOne{} faster for static packing than project
width one, while active packing with physical compaction adds
\CFeatureTrainingActiveOverStatic{} over static packing across three seeds
(range \CFeatureTrainingActiveOverStaticRange{}). These are E3 endpoint
attributions; the C63 solver field is reported separately in
\cref{tab:absolute-ot-times} and by the complete
\CFeatureTrainingActiveOTSpeedup{} public/DrainSinkhorn ratio.

C80 and C80-LM use the same Packer19 single-cell transition workload on two
physical hosts, each with four A100-SXM4-80GB GPUs. Three frozen method orders
produce 24 paired host--order--GPU units. Each unit uses warmup-excluded repeat
medians, the same input, Flash tree, $10^{-3}$ stopping tolerance, and transition
consumer.

\begin{table}[t]
\centering
\caption{Packer19 single-variable attribution under one frozen input and
release contract. Every row reports paired-consumer-subtracted OT-stage time,
so the component effects share one measurement boundary.}
\label{tab:c80-attribution}
\scriptsize
\setlength{\tabcolsep}{2.6pt}
\resizebox{\columnwidth}{!}{%
\begin{tabular}{@{}lllll@{}}
\toprule
Change & Matched contrast & Scope & Ratio & Evidence \\
\midrule
Batched audit & serial / batched verifier & OT stage & 1.133$\times$ & 24/24 wins \\
One-sided screen & two-sided / row screen & OT stage & 1.105$\times$ & 24/24 wins \\
Logical freeze & fixed / logical mask & OT stage & 1.000$\times$ & 95\% CI [0.9998, 1.0003] \\
Physical compaction & logical mask / compaction & OT stage & \CPackerCompactionOTSpeedup{} & 95\% CI \CPackerCompactionOTRange{} \\
\bottomrule
\end{tabular}}
\vspace{2pt}

{\scriptsize Batched-audit and row-screen rows have 24/24 paired wins; no
separate E1 bootstrap interval is registered for those two contrasts.}
\end{table}

\paragraph{Checking and screening.}
The controls expose distinct work. Batched two-sided verification and the
updated-marginal row screen reduce OT-stage time under matched paths. Each C80
ratio subtracts its paired consumer time before forming the contrast.

Shadow replay evaluates the full two-sided check on all
\CPackerShadowNegatives{} screen-negative events and finds
\CPackerShadowMissedReady{} already-releasable lanes. The maximum difference
between the screen and full-check row residual is
\CPackerScreenDiscrepancy{}. Every emitted plan passes the two-sided
$10^{-3}$ check. Shadow replay directly measures missed-ready events; the
observed count is zero.

\paragraph{Logical masking versus physical compaction.}
In C80-LM, logical masking is neutral because the Flash kernel retains width
eight. Physical compaction reduces the observed width, removes 100 of 256 physical
candidate slots, and yields \CPackerCompactionOTSpeedup{} OT-stage acceleration with
interval \CPackerCompactionOTRange{}. All 24 paired units favor compaction.
Physical compaction also reduces median energy from 880 to
688 J, while its compacted buffers raise peak allocated memory from 80.8 to
106.9 MiB. EXP-PACKER19 extends this attribution over five tolerances and shows
that the advantage disappears exactly when the release depths become uniform.

\subsection{Operating regime: heterogeneity and physical width}

M3 holds the set of 32 real MetroPT problems and total correction work fixed;
only their grouping into windows changes. Oracle homogeneous grouping minimizes
within-window depth variation. Heterogeneous and random groupings increase the
area between the static rectangle and the release-depth survival curve. The
matched static/active ratio grows from
\CMThreeHomogeneousRetirementSpeedup{} to
\CMThreeHeterogeneousRetirementSpeedup{} and 1.235--1.271$\times$ on random
layouts.

\begin{table}[t]
\centering
\caption{M3 grouping intervention on one fixed real task multiset. For
first-passage depths $D_s$ within a window,
$H_D=\operatorname{sd}(D_s)/\operatorname{mean}(D_s)$ and
padding is $W\max_s D_s/\sum_s D_s-1$. Active slots is the ratio of active to
static candidate-iteration slots. The displayed values aggregate over windows;
timing repeats stabilize each condition and are not independent datasets.}
\label{tab:m3-grouping}
\scriptsize
\setlength{\tabcolsep}{3pt}
\begin{tabular}{@{}lcccc@{}}
\toprule
Grouping & $H_D$ & Padding & Active slots & Static/active \\
\midrule
heterogeneous & 0.236 & 0.236 & 0.802 & 1.217$\times$ \\
homogeneous & 0.127 & 0.110 & 0.892 & 1.105$\times$ \\
random 1 & 0.254 & 0.293 & 0.776 & 1.258$\times$ \\
random 2 & 0.264 & 0.270 & 0.789 & 1.235$\times$ \\
random 3 & 0.260 & 0.279 & 0.764 & 1.271$\times$ \\
\bottomrule
\end{tabular}
\end{table}

C35 varies physical width on 40 paired real ImageNet-32 windows per cell.
$W=1$ is unprofitable, $W=2$ is transitional, and every measured window wins at
$W=4,8,16$, reaching $1.621\times$ at $W=16$. Together, M3 and C35 identify the
two resources behind active-set contraction: heterogeneous release depths create removable
work, and sufficient candidate width lets the backend exploit it.

\begin{table}[t]
\centering
\caption{C35 real ImageNet-32 width sweep on the project Flash path. Each width contains 40 paired frozen windows; speedup is baseline/active with a paired 95\% interval.}
\label{tab:width-regime}
\scriptsize
\setlength{\tabcolsep}{3.2pt}
\begin{tabular}{@{}rccc@{}}
\toprule
$W$ & Speedup [95\% CI] & Wins & Empirical regime \\
\midrule
1  & 0.992 [0.980, 1.008]$\times$ & 6/40  & width one \\
2  & 1.073 [1.049, 1.097]$\times$ & 33/40 & transition \\
4  & 1.229 [1.195, 1.264]$\times$ & 40/40 & stable compaction \\
8  & 1.369 [1.327, 1.411]$\times$ & 40/40 & stable compaction \\
16 & 1.621 [1.487, 1.797]$\times$ & 40/40 & stable compaction \\
\bottomrule
\end{tabular}
\end{table}

\paragraph{Independent finite-precision stress test.}
C10 measures decision behavior on 84 near-threshold candidates and 16 runtime
attacks over 1/2/4/8 GPUs. Relative to independent FP64 replay, raw FP32 and
TF32 each disagree on three accept decisions; an outward-bound guard followed
by FP64 escalation has zero disagreements in these 100 cases. This guarded
configuration is evaluated only inside C10. It is not the public Flash path and
is not included in any timed application ratio.

\paragraph{Timed Flash applications and independent-worker scale-out.}
The timed Flash path uses the precision supported by the backend and the
configured two-sided verifier; it does not invoke the C10 FP64 replay or
escalation. All emitted plans meet the recorded two-sided residual, and every
endpoint passes its registered output or quality gate. C67 assigns complete
windows to independent GPU workers. Four workers achieve
3.965--3.979$\times$ fixed-per-worker throughput scaling on held-out MetroPT
routes, with no Sinkhorn collective between workers.
\FloatBarrier

\section{Operating Regime and Composition}
\label{sec:scope}

\policy{} applies to batched positive EOT problems that share a compatible
packed representation and expose a two-sided residual. Its benefit is governed
by the completion-depth survival curve and the backend's response to physical
width. The measured operating surface favors width-one execution for a single
problem, static packing for nearly uniform depths, and physical compaction when
depth disagreement exposes enough removable padding. C35, M3, and C80-LM
measure these width and heterogeneity regimes. EXP-PACKER19 extends the matched
logical-mask/physical-compaction comparison across five residual tolerances:
the gain is $1.480$--$1.638\times$ before becoming neutral at $10^{-2}$, where
all lanes first pass on the same audit. Formal campaigns freeze the execution
mode before timing. Irregular shapes are bucketed by compatible representation.

The active set composes with inner solvers that expose independent candidate
state. One implementation reuses FlashSinkhorn's shifted-potential formulation
and Triton helpers and adds a candidate program axis. Independent OTT-JAX and
PyKeOps implementations use phase rebucketing and candidate-axis execution
without the Flash kernel. Alternative initialization or inner acceleration can
reduce a lane's completion depth before the same physical shrinking step.

Fixed-work parity, residual replay, near-threshold stress tests, and consumer
endpoints connect the ideal-arithmetic identities to production execution. The
depth-to-work identity and hardware break-even model quantify the execution
opportunity.

Physical compaction trades work for state movement and buffers. C80-LM measures
about 26.1 MiB of additional peak allocation on Packer19 while wall time and
energy fall. This cost enters deployment selection together with the observed
survival profile and empirical width regime.

\section{Conclusion}
\label{sec:conclusion}

\policy{} removes padded OT work from batches with unequal completion depths.
Candidate-axis packing evaluates the original per-instance Sinkhorn map;
alternating scaling supplies a cheap one-sided screen; physical compaction
narrows every subsequent candidate-indexed kernel. The survival-curve identity
counts the candidate updates that disappear, while the joint-depth condition
states when a random window contains positive expected removable work. The
quotient modal response explains why finite-tolerance depths depend on proposal
alignment as well as contraction rates.

Complete matched-controller OT-stage comparisons deliver
\CPackerSameControllerRange{} on Packer19 across five residual tolerances,
\CMetroOTDeploymentSpeedup{} on MetroPT-3, and
\CFeatureTrainingActiveOTSpeedup{} on ImageNet-32 feature couplings. Independent
OTT-JAX and PyKeOps implementations transfer the shrinking-active-set principle
beyond Flash. These paths achieve state-of-the-art execution
performance on the tested heterogeneous batched-EOT workloads within their
matched backend families.

Matched controls identify the mechanism. Batched checking and the
Sinkhorn-specific screen reduce inspection cost; logical masking leaves the
physical kernel width unchanged; compaction removes OT work and energy. Across
Packer19 tolerances it yields \CPackerCompactionRange{} over logical masking
when completion depths differ and becomes neutral when all lanes finish
together.
Fixed-work grouping shows that greater within-window depth disagreement creates
more removable padding, and width sweeps show when the backend converts that
work reduction into wall time. Across the reported applications, achieved
residuals, consumer outputs, and training-quality checks are reported alongside
speed. Fast inner kernels make each update cheaper; DrainSinkhorn stops issuing
those updates to completed problems.

\paragraph{Reproducibility.}
The release artifact records source and input hashes, commands, software and
GPU environments, raw outputs, numerical checks, paired analyses, and every
public baseline used in the manuscript.

\begingroup
\small
\bibliographystyle{plainnat}
\bibliography{references}

@inproceedings{cuturi2013sinkhorn,
  author    = {Marco Cuturi},
  title     = {Sinkhorn Distances: Lightspeed Computation of Optimal Transportation Distances},
  booktitle = {Advances in Neural Information Processing Systems},
  volume    = {26},
  pages     = {2292--2300},
  year      = {2013},
  url       = {https://arxiv.org/abs/1306.0895}
}

@inproceedings{feydy2019sinkhorn,
  author    = {Jean Feydy and Thibault S{\'e}journ{\'e} and Fran{\c{c}}ois-Xavier Vialard and Shun-ichi Amari and Alain Trouv{\'e} and Gabriel Peyr{\'e}},
  title     = {Interpolating between Optimal Transport and {MMD} Using Sinkhorn Divergences},
  booktitle = {Proceedings of the 22nd International Conference on Artificial Intelligence and Statistics},
  series    = {Proceedings of Machine Learning Research},
  volume    = {89},
  pages     = {2681--2690},
  year      = {2019},
  url       = {https://proceedings.mlr.press/v89/feydy19a.html}
}

@misc{cuturi2022ott,
  author        = {Marco Cuturi and Laetitia Meng-Papaxanthos and Yingtao Tian and Charlotte Bunne and Geoff Davis and Olivier Teboul},
  title         = {Optimal Transport Tools ({OTT}): A {JAX} Toolbox for All Things Wasserstein},
  year          = {2022},
  eprint        = {2201.12324},
  archiveprefix = {arXiv},
  url           = {https://arxiv.org/abs/2201.12324}
}

@misc{ye2026flashsinkhorn,
  author        = {Felix X.-F. Ye and Xingjie Li and An Yu and Ming-Ching Chang and Linsong Chu and Davis Wertheimer},
  title         = {{FlashSinkhorn}: {IO}-Aware Entropic Optimal Transport on {GPU}},
  year          = {2026},
  eprint        = {2602.03067},
  archiveprefix = {arXiv},
  url           = {https://arxiv.org/abs/2602.03067}
}

@inproceedings{amos2023meta,
  author    = {Brandon Amos and Giulia Luise and Samuel Cohen and Ievgen Redko},
  title     = {Meta Optimal Transport},
  booktitle = {Proceedings of the 40th International Conference on Machine Learning},
  series    = {Proceedings of Machine Learning Research},
  volume    = {202},
  pages     = {791--813},
  year      = {2023},
  url       = {https://proceedings.mlr.press/v202/amos23a.html}
}

@inproceedings{thornton2023initialization,
  author    = {James Thornton and Marco Cuturi},
  title     = {Rethinking Initialization of the Sinkhorn Algorithm},
  booktitle = {Proceedings of the 26th International Conference on Artificial Intelligence and Statistics},
  series    = {Proceedings of Machine Learning Research},
  volume    = {206},
  pages     = {8682--8698},
  year      = {2023},
  url       = {https://proceedings.mlr.press/v206/thornton23a.html}
}

@misc{zhang2025large,
  author        = {Stephen Zhang and Alireza Mousavi-Hosseini and Michal Klein and Marco Cuturi},
  title         = {On Fitting Flow Models with Large Sinkhorn Couplings},
  year          = {2025},
  eprint        = {2506.05526},
  archiveprefix = {arXiv},
  url           = {https://arxiv.org/abs/2506.05526}
}

@inproceedings{altschuler2017near,
  author    = {Jason Altschuler and Jonathan Niles-Weed and Philippe Rigollet},
  title     = {Near-Linear Time Approximation Algorithms for Optimal Transport via {Sinkhorn} Iteration},
  booktitle = {Advances in Neural Information Processing Systems},
  volume    = {30},
  year      = {2017},
  url       = {https://papers.nips.cc/paper_files/paper/2017/hash/491442df5f88c6aa018e86dac21d3606-Abstract.html}
}

@misc{thibault2017overrelaxed,
  author        = {Alexis Thibault and L{\'e}na{\"i}c Chizat and Charles Dossal and Nicolas Papadakis},
  title         = {Overrelaxed {Sinkhorn--Knopp} Algorithm for Regularized Optimal Transport},
  year          = {2017},
  eprint        = {1711.01851},
  archiveprefix = {arXiv},
  url           = {https://arxiv.org/abs/1711.01851}
}

@inproceedings{scetbon2021lowrank,
  author    = {Meyer Scetbon and Marco Cuturi and Gabriel Peyr{\'e}},
  title     = {Low-Rank {Sinkhorn} Factorization},
  booktitle = {Proceedings of the 38th International Conference on Machine Learning},
  series    = {Proceedings of Machine Learning Research},
  volume    = {139},
  pages     = {9344--9354},
  year      = {2021},
  url       = {https://proceedings.mlr.press/v139/scetbon21a.html}
}

@misc{kemertas2025truncated,
  author        = {Mete Kemertas and Amir-massoud Farahmand and Allan D. Jepson},
  title         = {A Truncated Newton Method for Optimal Transport},
  year          = {2025},
  eprint        = {2504.02067},
  archiveprefix = {arXiv},
  url           = {https://arxiv.org/abs/2504.02067}
}

@inproceedings{pooladian2023multisample,
  author    = {Aram-Alexandre Pooladian and Heli Ben-Hamu and Carles Domingo-Enrich and Brandon Amos and Yaron Lipman and Ricky T. Q. Chen},
  title     = {Multisample Flow Matching: Straightening Flows with Minibatch Couplings},
  booktitle = {Proceedings of the 40th International Conference on Machine Learning},
  series    = {Proceedings of Machine Learning Research},
  volume    = {202},
  pages     = {28100--28127},
  year      = {2023},
  url       = {https://proceedings.mlr.press/v202/pooladian23a.html}
}

@inproceedings{radul2020autobatching,
  author    = {Alexey Radul and Brian Patton and Dougal Maclaurin and Matthew D. Hoffman and Rif A. Saurous},
  title     = {Automatically Batching Control-Intensive Programs for Modern Accelerators},
  booktitle = {Proceedings of Machine Learning and Systems},
  volume    = {2},
  year      = {2020},
  url       = {https://proceedings.mlsys.org/paper_files/paper/2020/hash/39945d578f616735572174bf5e8f155d-Abstract.html}
}

@inproceedings{yu2022orca,
  author    = {Gyeong-In Yu and Joo Seong Jeong and Geon-Woo Kim and Soojeong Kim and Byung-Gon Chun},
  title     = {{Orca}: A Distributed Serving System for Transformer-Based Generative Models},
  booktitle = {16th USENIX Symposium on Operating Systems Design and Implementation},
  pages     = {521--538},
  year      = {2022},
  url       = {https://www.usenix.org/conference/osdi22/presentation/yu}
}

@misc{ginkgoBatched,
  author       = {{Ginkgo Project}},
  title        = {Batched Solvers: Overview},
  year         = {2026},
  howpublished = {Ginkgo user guide},
  note         = {Accessed 2026-08-28},
  url          = {https://gko-project.org/user-guide/concepts/batched.html}
}

@misc{viljoen2026jaxipm,
  author        = {John Viljoen and Johanna Haffner and Masayoshi Tomizuka and Negar Mehr},
  title         = {Scaling Nonlinear Optimization: Many Problems One {GPU}},
  year          = {2026},
  eprint        = {2606.26341},
  archiveprefix = {arXiv},
  url           = {https://arxiv.org/abs/2606.26341}
}

@inproceedings{zhang2024ltgnn,
  author    = {Jiahao Zhang and Rui Xue and Wenqi Fan and Xin Xu and Qing Li and Jian Pei and Xiaorui Liu},
  title     = {Linear-Time Graph Neural Networks for Scalable Recommendations},
  booktitle = {Proceedings of the ACM Web Conference 2024},
  pages     = {3533--3544},
  publisher = {ACM},
  year      = {2024},
  doi       = {10.1145/3589334.3645486},
  url       = {https://doi.org/10.1145/3589334.3645486}
}
\endgroup

\appendix
\section{Measurement Contracts and Additional Detail}

\paragraph{Measurement levels.}
E1 terminates at the measured coupling output or potentials after the configured
two-sided stopping check. E2 includes
the fixed downstream consumer. E3 includes training, generation, or task
evaluation. C66 and C80 record E2 endpoints, and C63 records an E3 training
endpoint. The main-text OT ratios extract E1 from those paired records: C66 and
C80 subtract each unit's paired consumer time before ratio formation and
bootstrap, while C63 uses the recorded solver field. EXP-PACKER19 records E1
directly for every stopping-controller and execution-mode cell. Complete E2/E3 ratios are visualized
in \cref{fig:endpoint-validation}. The unchanged consumer or model time in
E2/E3 dilutes the corresponding OT-stage gain.
The C63 width-one/static and static/active ratios reported as internal training
ablations are computed from the complete fixed-update E3 interval; they are not
used to partition its separately recorded solver field.

\paragraph{C66 release-fair controls.}
Each workload runs independent Official public Flash serial, legacy
static/active, and current static/active arms. Within the matched current
contrast, static and active use the same inputs, cold proposal,
regularization, tolerance, audit grid, two-sided check, packed backend, and
fixed consumer; verifier-gated retirement and physical compaction are the controlled
difference. Three order-balanced plan seeds are reused across four GPU
placements. GPU placements and timing repeats stabilize end-to-end measurement;
the registered endpoint ratio includes unchanged non-OT consumer time. The
main-text C66 OT-stage ratio subtracts the paired consumer time within each
host--plan unit. The plan seed remains the independent experimental unit;
placements and repeats are timing replicas rather than independent MetroPT
workloads. A reanalysis of the six valid physical-host--plan units reports a
stratified bootstrap interval.

\paragraph{EXP-PACKER19 provenance.}
The stopping--execution protocol has SHA-256
\nolinkurl{c010bcf9b3e276c8269514324cc8ed4bc2fc5958c362c7b880e3946dee75075b}
and is bound to source commit
\nolinkurl{96de52b8ceff8fa3c0468bf45567e6b017483e73}. Two physical hosts, four
A100-SXM4-80GB GPUs per host, three frozen method orders, and three timing
repeats produce 768 formal cells. Each tolerance has 24 paired
host--order--GPU units and an equal-host bootstrap interval. Consumer TV is a
numerical output-deviation diagnostic; the campaign establishes no biological
tolerance requirement.

\begin{table}[t]
\centering
\caption{Measured mapping from the public Flash native potential-change
threshold to achieved two-sided residual, consumer-output deviation, and E1 OT
time on Packer19. The native threshold is not a marginal-residual tolerance.
For thresholds at most $3\times10^{-4}$, some lanes reach the registered
iteration cap; those rows are plateau replay measurements rather than evidence
that every lane triggered the native stopping rule.}
\label{tab:packer-native-stopping}
\scriptsize
\setlength{\tabcolsep}{2.8pt}
\resizebox{\columnwidth}{!}{%
\begin{tabular}{@{}rrrr@{}}
\toprule
Native threshold & Actual residual & Max consumer TV & E1 time (s) \\
\midrule
$10^{-2}$ & $9.01\times10^{-4}$ & $8.83\times10^{-4}$ & 6.812 \\
$3\times10^{-3}$ & $1.93\times10^{-4}$ & $1.91\times10^{-4}$ & 11.403 \\
$10^{-3}$ & $1.11\times10^{-5}$ & $2.08\times10^{-6}$ & 20.607 \\
$3\times10^{-4}$ & $1.18\times10^{-6}$ & $4.96\times10^{-7}$ & 40.625 \\
$10^{-4}$ & $1.18\times10^{-6}$ & $4.96\times10^{-7}$ & 43.251 \\
$10^{-5}$ & $1.18\times10^{-6}$ & $4.96\times10^{-7}$ & 46.285 \\
$10^{-6}$ & $1.18\times10^{-6}$ & $4.96\times10^{-7}$ & 46.281 \\
\bottomrule
\end{tabular}}
\end{table}

The table maps the public native potential-change controller to achieved
marginal residual, consumer TV, and E1 time. Potential-change and
marginal-residual thresholds have distinct stopping semantics. The formal
causal ratios are those in
\cref{tab:packer-pareto}.

\paragraph{Execution-mode protocol.}
The formal campaigns fix representation, width, execution mode, audit interval,
stopping tolerance, checking path, and fallback before timing. C35 and M3 map
the width and depth-heterogeneity conditions for selecting static execution or
physical active-set contraction.

\paragraph{Independent backend realizations.}
The OTT-JAX and PyKeOps paths implement active retirement through their native
backend mechanisms. OTT-JAX executes fixed 20-update phases, checks at phase
boundaries, retires passing lanes, and rebuckets survivors; native
\texttt{vmap} and a matched fixed-phase static path provide its controls. The
PyKeOps paths use one
candidate-batched matrix-free operator, a fixed geometric audit schedule,
two-sided residual stopping, and physical removal of passing candidate state.
Sequential and static-padded PyKeOps paths use the same symbolic backend. The
shared principle is per-candidate retirement from a shrinking active
set; audit and packing mechanisms remain backend-specific. All cross-backend
rows in \cref{tab:ot-cross-backend} report their achieved two-marginal residual.

\begin{table}[t]
\centering
\caption{Configuration of the main formal workloads. All use cold proposals and the current two-sided marginal release contract.}
\label{tab:workload-configs}
\scriptsize
\setlength{\tabcolsep}{3pt}
\begin{tabular}{@{}p{.16\columnwidth}p{.76\columnwidth}@{}}
\toprule
Campaign & Endpoint, numerical configuration, and statistical unit \\
\midrule
C66 & MetroPT-3 predictive maintenance (E2); $(n,W,\varepsilon)=(16{,}384,16,.1)$; audit every 4; $\tau=10^{-3}$; two hosts and three plans. \\
C80/LM & Packer19 scRNA transition (E2); $(16{,}384,8,.1)$; audit every 4; $\tau=10^{-3}$; 24 paired units. \\
EXP-P19 & Packer19 same-controller Pareto (E1); $(16{,}384,8,.1)$; audit every 4; $\tau=10^{-4}\text{--}10^{-2}$; 24 paired units per tolerance. \\
C63 & ImageNet-32 PCA500 OT-FM (E3); $(1{,}024,8,.03)$; audit every 4; $\tau=10^{-3}$; three paired seeds. \\
C35 & ImageNet-32 PCA500 width sweep (E1); $(4{,}096,1\text{--}16,.1)$; audit every 5; $\tau=10^{-3}$; 40 paired windows per width. \\
\bottomrule
\end{tabular}
\end{table}

\begin{figure}[t]
\centering
\includegraphics[width=\columnwidth]{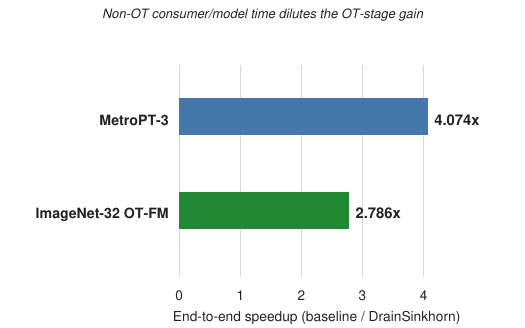}
\caption{Complete application end-to-end speedups from the earlier endpoint
campaigns. MetroPT-3 uses the same $10^{-3}$ residual contract, and all nine ImageNet-32
feature OT-FM checks at NFE $4/8/16$ pass. These intervals include consumer,
model, or training work outside OT.}
\label{fig:endpoint-validation}
\end{figure}

\paragraph{Quotient PF mechanism.}
At a positive fixed point, the full-cycle Jacobian has the weighted-quotient
form $J^\star=P^\ast P$. For a tangent error
$e_0=\sum_{j\geq2}\alpha_j\phi_j$, the modal defect is
\begin{equation}
\left\|(I-J^\star)(J^\star)^k e_0\right\|_\star^2
=\sum_{j\geq2}(1-\lambda_j)^2\alpha_j^2\lambda_j^{2k}.
\label{eq:pf-tail-hypothesis}
\end{equation}
This expression explains why a $\lambda_2$-only estimate can miss the observed
finite-tolerance depth: the proposal may put little energy in the slowest mode,
and modal dominance changes with $k$. Extending the tangent expression to a
nonlinear trajectory requires a local remainder and an observable conversion;
the paper uses it as a mechanism model and measures completion directly.

\section{Deployment and Operating Boundaries}

\paragraph{Complete-device and row-sharded execution.}
When one problem fits on one accelerator, complete-device workers process
different windows and require no Sinkhorn collective. Row sharding supplies a
capacity path for an oversized single problem. Complete-device execution is the
throughput path used by C67. On one host, increasing from one to four workers
yields $3.965$--$3.979\times$ fixed-per-worker throughput scaling on two
held-out MetroPT routes. The two-host, eight-worker fixed-per-worker run reports
endpoint comparisons at that worker count. We report the one-to-four and
two-host eight-worker results as distinct scaling studies. Peak memory remains
local to each worker.

\end{document}